\documentclass[11pt,a4paper]{article}
\pdfoutput=1

\makeatletter
\providecommand{\@listi}{\leftmargin\leftmargini}
\makeatother

\usepackage{jheppub}

\usepackage{amsthm}
\usepackage{booktabs}
\usepackage{multirow}
\usepackage{array}
\usepackage{slashed}
\usepackage{siunitx}
\usepackage{xcolor}
\usepackage{tikz}
\usepackage{pgfplots}
\usepackage{float}

\usetikzlibrary{positioning,arrows.meta,decorations.markings,shapes}
\pgfplotsset{compat=1.17}

\newcommand{\GeV}{\,\mathrm{GeV}}

\newcommand{\eV}{\,\mathrm{eV}}
\newcommand{\vev}[1]{\langle #1 \rangle}

\newcommand{\eps}{\varepsilon}

\newcommand{\MR}{M_R}
\newcommand{\MD}{M_D}
\newcommand{\Mnu}{M_\nu}

\newtheorem{theorem}{Theorem}[section]
\newtheorem{lemma}[theorem]{Lemma}

\title{From seesaw over-suppression to trimaximal mixing:\\
why \boldmath $A_4$ is the minimal resolution of the \boldmath $Z_3$ neutrino failure}

\author{Navid Ardakanian}
\affiliation{Independent Researcher}
\emailAdd{n.ardakanian@gmail.com}

\abstract{We investigate whether the type-I seesaw mechanism can rescue the $Z_3$
Froggatt--Nielsen framework for neutrinos and find that it cannot.  With
right-handed Majorana masses carrying the $Z_3$ charge structure dictated
by the Majorana bilinear---where suppression powers follow
$(q_i+q_j)\bmod 3$---the mass matrix contains an unsuppressed
off-diagonal entry whose dominance in $M_R^{-1}$, combined with the
hierarchical column texture of $M_D$, over-suppresses the two lightest
neutrino masses to $\mathcal{O}(\varepsilon^3)$ while $m_3$ remains
$\mathcal{O}(1)$.  This pushes the solar-to-atmospheric mass ratio to a
median $\Delta m^2_{21}/\Delta m^2_{31}\sim 4\times 10^{-11}$---eight
orders of magnitude below the observed value of $0.030$.  We prove this
failure is universal across all six permutations of the charges
$(2,1,0)$ and show analytically that the generic ratio scales as
$\Delta m^2_{21}/\Delta m^2_{31}\sim\mathcal{O}(\varepsilon^6)
\sim 10^{-11}$, with fewer than $0.01\%$ of parameter-space points
exceeding $\varepsilon^2\approx 2\times 10^{-4}$.  The PMNS angles
remain Haar-random, carrying no information from the expansion parameter.
We then show that $A_4$, the alternating group of order 12, is the
minimal discrete symmetry resolving both failures.  Its triplet
representation provides two independent vacuum parameters controlling
the solar and atmospheric mass scales separately, while constraining the
PMNS matrix to the trimaximal TM$_1$ pattern.  The TM$_1$ solar sum
rule predicts $\sin^2\theta_{12}=0.318$ ($1.2\sigma$ from NuFit 6.0,
$1.0\sigma$ from JUNO), and the atmospheric sum rule yields a
parameter-free $(\sin^2\theta_{23},\,\cos\delta)$ correlation predicting
$\delta\approx -71^\circ$, testable at DUNE and T2HK.}

\keywords{neutrino mixing, discrete flavor symmetry, Froggatt--Nielsen mechanism,
$Z_3$, $A_4$, seesaw over-suppression, trimaximal mixing, seesaw mechanism}

\arxivnumber{2603.21264}

\begin{document}
\maketitle
\flushbottom

%%%%%%%%%%%%%%%%%%%%%%%%%%%%%%%%%%%%%%%%%%%%%%%%%%%%%%%%%%%%
\section{Introduction}
\label{sec:intro}
%%%%%%%%%%%%%%%%%%%%%%%%%%%%%%%%%%%%%%%%%%%%%%%%%%%%%%%%%%%%

The flavor puzzle remains one of the deepest open questions in particle physics.
Fermion masses span twelve orders of magnitude---from sub-eV neutrinos to the
$173\GeV$ top quark---and the mixing patterns differ qualitatively between quarks
and leptons: the Cabibbo--Kobayashi--Maskawa (CKM) matrix exhibits small,
hierarchical mixing angles, while the Pontecorvo--Maki--Nakagawa--Sakata (PMNS)
matrix features two large angles and one small
one~\cite{Esteban:2024nav}. No principle within the Standard Model explains
either the mass hierarchies or the mixing dichotomy.

Discrete flavor symmetries offer a systematic approach to this
problem~\cite{Altarelli:2010gt,Ishimori:2010au,King:2013eh}. Among the simplest
frameworks is the Froggatt--Nielsen (FN) mechanism~\cite{Froggatt:1979np} based
on a $Z_3$ symmetry. As demonstrated in ref.~\cite{Ardakanian:Z3}, a single
expansion parameter $\eps \simeq 0.015$, fixed from the ratio $m_u/m_t$,
structurally accounts for the hierarchical pattern of quark and charged-lepton
mass ratios with natural $\mathcal{O}(1)$ Yukawa couplings. The model assigns
generation-dependent $Z_3$ charges only to the right-handed fermions, with
left-handed doublets uncharged, producing a \emph{column texture} in which
entry $(i,j)$ of the mass matrix carries $\eps^{q_j}$ regardless of $i$. This
column texture is the defining feature of the framework: it structurally predicts
the mass hierarchy $m_1:m_2:m_3 \sim \eps^2:\eps:1$ across the quark and
charged-lepton sectors~\cite{Ardakanian:Z3}.

However, the $Z_3$ framework fails for neutrinos on two fronts. First, the mass
spectrum is far too hierarchical: the column texture predicts
$\Delta m^2_{21}/\Delta m^2_{31} \lesssim 10^{-4}$, whereas the observed ratio
is $0.030$~\cite{Esteban:2024nav}. Second, the PMNS mixing angles carry no
information from $\eps$---they are generically $\mathcal{O}(1)$, consistent with
Haar-distributed random unitaries~\cite{deGouvea:2012ac}, providing no angular
structure whatsoever~\cite{Ardakanian:Z3}.

This raises a sharp question: \emph{can the type-I seesaw mechanism, combined
with a right-handed Majorana mass matrix $\MR$ carrying the $Z_3$ charge
structure dictated by the Majorana bilinear, rescue the framework for
neutrinos?}

In this paper---the second in a two-paper sequence following
ref.~\cite{Ardakanian:Z3}---we answer this question comprehensively, and then
identify the minimal discrete symmetry that resolves the failure.

We show that the correct $Z_3$ Majorana charge algebra, where $\eps$-powers are
determined by $(q_i + q_j) \bmod 3$ rather than $q_i + q_j$, produces a
qualitatively unexpected $\MR$ structure containing an unsuppressed off-diagonal
entry. This entry dominates $\MR^{-1}$, and when combined with the hierarchical
column texture of $\MD$, it over-suppresses both $m_1$ and $m_2$ to
$\mathcal{O}(\eps^3)$ while $m_3$ remains $\mathcal{O}(1)$, \emph{deepening}
the mass spectrum failure from $\sim 10^{-4}$ to $\sim 10^{-11}$. The median
predicted ratio $\Delta m^2_{21}/\Delta m^2_{31} \sim 4 \times 10^{-11}$---eight
orders of magnitude below the observed value---and only a fraction $\sim 10^{-5}$
of the $\mathcal{O}(1)$ parameter space comes within a factor of~3 of the data
through accidental cancellations.

We prove this failure is \emph{universal}: it holds identically for all six
permutations of the $Z_3$ charges $(2,1,0)$ among the three generations. The
proof rests on a number-theoretic property---the pair with charges 1 and 2 always
sums to $3 \equiv 0 \pmod{3}$, guaranteeing an unsuppressed off-diagonal entry
in every case. All six permutations yield the same median
$R \sim 4 \times 10^{-11}$.

Having established the precise structural failure of $Z_3$ and identified its
origin, we then ask: \emph{what is the minimal discrete symmetry that resolves
the two-scale problem?} In sections~\ref{sec:why_nonabelian}--\ref{sec:conclusions}
we show that $A_4$, the alternating
group of order~12, provides exactly the additional structure needed. Its triplet
representation introduces two independent vacuum parameters that control the two
neutrino mass scales independently, while simultaneously constraining the PMNS
mixing pattern.

The paper is organized as follows. Section~\ref{sec:framework} establishes the
$Z_3$ seesaw framework, including the column texture and the correct Majorana
charge algebra. Section~\ref{sec:inverse} presents the inverse seesaw
reconstruction, demonstrating that the required $\MR$ is consistent with $Z_3$
charge scaling. Section~\ref{sec:analytical} identifies the seesaw
over-suppression mechanism through forward scans and analytical derivation, showing that
the generic ratio scales as $\mathcal{O}(\eps^6)$.
Section~\ref{sec:universal} proves that the failure is universal across all
charge permutations. Section~\ref{sec:why_nonabelian} explains why non-abelian
structure is required. Section~\ref{sec:A4} develops the $A_4$ resolution.
Section~\ref{sec:comparison} compares TM$_1$ predictions with data.
Section~\ref{sec:discussion} discusses implications, and
section~\ref{sec:conclusions} concludes.

%%%%%%%%%%%%%%%%%%%%%%%%%%%%%%%%%%%%%%%%%%%%%%%%%%%%%%%%%%%%
\section{The \boldmath $Z_3$ seesaw framework}
\label{sec:framework}
%%%%%%%%%%%%%%%%%%%%%%%%%%%%%%%%%%%%%%%%%%%%%%%%%%%%%%%%%%%%

\subsection{The \boldmath $Z_3$ Froggatt--Nielsen column texture}
\label{sec:column_texture}

We extend the Standard Model by a complex scalar flavon $\Phi$ transforming as
$\Phi \to \omega\Phi$ under $Z_3$, where $\omega = e^{2\pi i/3}$. The SM Higgs
$H$ is $Z_3$-neutral. The minimal charge assignment is
\begin{align}
    Q_L^i,\; L_L^i &\sim 0 \quad (i=1,2,3), \nonumber\\
    f_R^1 &\sim 2, \quad f_R^2 \sim 1, \quad f_R^3 \sim 0,
    \label{eq:charges}
\end{align}
where $f_R$ denotes any right-handed fermion ($u_R, d_R, e_R, \nu_R$). Since
the left-handed doublets are $Z_3$-neutral, the $\eps$-suppression depends only
on the right-handed charge, yielding the column texture
\begin{equation}
    \MD^f = \frac{v_H}{\sqrt{2}}
    \begin{pmatrix}
        c_{11}^f\, \eps^2 & c_{12}^f\, \eps & c_{13}^f \\[2pt]
        c_{21}^f\, \eps^2 & c_{22}^f\, \eps & c_{23}^f \\[2pt]
        c_{31}^f\, \eps^2 & c_{32}^f\, \eps & c_{33}^f
    \end{pmatrix},
    \label{eq:MD}
\end{equation}
where every entry in column $j$ carries $\eps^{q_j}$ with $q_j = 2, 1, 0$, and
the $c_{ij}^f$ are $\mathcal{O}(1)$ complex Yukawa couplings. This structure
generates mass hierarchies $m_1:m_2:m_3 \sim \eps^2:\eps:1$ as a structural
prediction~\cite{Ardakanian:Z3}.

A crucial consequence of the column texture is that the left-handed unitary
rotations $U^f_L$ diagonalizing $\MD^f (\MD^f)^\dagger$ are determined by the
directions of the column vectors
$\vec{c}^f_j = (c^f_{1j}, c^f_{2j}, c^f_{3j})^T$ in flavor space. Since these
directions are set by random $\mathcal{O}(1)$ complex numbers, $U^f_L$ is
generically an $\mathcal{O}(1)$ unitary rotation. Both the CKM matrix
$V_{\text{CKM}} = U^{u\dagger}_L U^d_L$ and the PMNS matrix
$U_{\text{PMNS}} = U^{\ell\dagger}_L U^\nu_L$ are therefore generically random
unitaries, carrying no parametric information from
$\eps$~\cite{Ardakanian:Z3}.

\subsection{Type-I seesaw}
\label{sec:seesaw}

In the type-I seesaw, the light neutrino mass matrix is
\begin{equation}
    \Mnu = -\MD\, \MR^{-1}\, \MD^T,
    \label{eq:seesaw}
\end{equation}
where $\MD$ is the neutrino Dirac matrix with the column texture of
eq.~\eqref{eq:MD}.

\subsection{The \boldmath $Z_3$ charge structure of \boldmath $\MR$}
\label{sec:MR_charges}

The Majorana mass term $\nu_R^{cT} C \MR \nu_R$ is a bilinear, so the $Z_3$
charge of the $(i,j)$ entry is $q_i + q_j$. For $Z_3$-invariance, each entry
must be dressed with flavon insertions such that the total charge vanishes
modulo~3. The minimum number of insertions is
\begin{equation}
    n_{ij} = \begin{cases}
        0 & \text{if } (q_i + q_j) \equiv 0 \pmod{3}, \\[3pt]
        1 & \text{if } (q_i + q_j) \equiv 1 \text{ or } 2 \pmod{3}.
    \end{cases}
    \label{eq:power_rule}
\end{equation}
For entries with charge sum $\equiv 1\pmod{3}$, a single $\Phi$ insertion
(charge~$+1$) cancels the residual charge.  For entries with charge sum
$\equiv 2\pmod{3}$, a single $\Phi^\dagger$ insertion (carrying charge
$-1\equiv 2\pmod{3}$) suffices, with
$\vev{\Phi^\dagger}/\Lambda=\eps$.\footnote{Equivalently, one may
introduce a conjugate flavon $\bar{\Phi}$ with $Z_3$ charge~2 and
$\vev{\bar{\Phi}}/\Lambda=\eps$.  Both choices give $n_{ij}=1$ for all
nonzero residues.}  Note the crucial role of the modular arithmetic: the
$\eps$-power is determined by $(q_i + q_j) \bmod 3$, \emph{not} by
$q_i + q_j$ itself.

For the standard charge assignment $(q_1, q_2, q_3) = (2, 1, 0)$, we compute:
\begin{equation}
    (q_i + q_j) \bmod 3 =
    \begin{pmatrix}
        1 & 0 & 2 \\
        0 & 2 & 1 \\
        2 & 1 & 0
    \end{pmatrix},
    \quad\Longrightarrow\quad
    n_{ij} =
    \begin{pmatrix}
        1 & 0 & 1 \\
        0 & 1 & 1 \\
        1 & 1 & 0
    \end{pmatrix}.
    \label{eq:powers_210}
\end{equation}
The resulting Majorana mass matrix is
\begin{equation}
    \MR = M_0
    \begin{pmatrix}
        a_{11}\,\eps   & a_{12}         & a_{13}\,\eps \\[2pt]
        a_{12}         & a_{22}\,\eps   & a_{23}\,\eps \\[2pt]
        a_{13}\,\eps   & a_{23}\,\eps   & a_{33}
    \end{pmatrix},
    \label{eq:MR_structure}
\end{equation}
where the $a_{ij}$ are $\mathcal{O}(1)$ complex coefficients, $M_0$ is the
overall Majorana mass scale, and the matrix is symmetric ($a_{ij} = a_{ji}$) as
required for a Majorana mass term.

\paragraph{Key observation.} The $(1,2)$ and $(3,3)$ entries are
\emph{unsuppressed}. This follows because $q_1 + q_2 = 3 \equiv 0 \pmod{3}$ and
$q_3 + q_3 = 0 \equiv 0 \pmod{3}$. All other entries are suppressed by one power
of $\eps$. This structure is radically different from the naive expectation
$n_{ij} = q_i + q_j$, which would give a strongly hierarchical $\MR$. The
modular arithmetic of $Z_3$ produces a much flatter structure with an
unsuppressed off-diagonal entry---the origin of the seesaw over-suppression that
drives the central result of this paper.

%%%%%%%%%%%%%%%%%%%%%%%%%%%%%%%%%%%%%%%%%%%%%%%%%%%%%%%%%%%%
\section{Inverse seesaw reconstruction}
\label{sec:inverse}
%%%%%%%%%%%%%%%%%%%%%%%%%%%%%%%%%%%%%%%%%%%%%%%%%%%%%%%%%%%%

\subsection{Exact reconstruction formula}
\label{sec:inverse_formula}

Given the seesaw relation eq.~\eqref{eq:seesaw}, one can solve algebraically for
$\MR$:
\begin{equation}
    \MR = -\MD^T\, \Mnu^{-1}\, \MD,
    \label{eq:inverse_seesaw}
\end{equation}
valid when both $\MD$ and $\Mnu$ are invertible.  For random $\mathcal{O}(1)$
column-texture coefficients, $\MD$ is generically invertible (singular
configurations have measure zero in the parameter space). We construct $\Mnu$
from the
NuFit~6.0 global fit~\cite{Esteban:2024nav} (normal ordering):
\begin{align}
    \sin^2\theta_{12} &= 0.304 \pm 0.012, \quad
    \sin^2\theta_{23} = 0.573 \pm 0.016, \quad
    \sin^2\theta_{13} = 0.02220 \pm 0.00062, \nonumber\\
    \Delta m^2_{21} &= (7.42 \pm 0.21) \times 10^{-5}\eV^2, \quad
    \Delta m^2_{31} = (2.510 \pm 0.027) \times 10^{-3}\eV^2.
    \label{eq:nufit}
\end{align}

\subsection{Scan setup and results}
\label{sec:inverse_scan}

We perform the inverse reconstruction for $10{,}000$ random realizations with
the column-texture $\MD$, sampling Yukawa magnitudes $|c_{ij}| \in [0.3, 3.0]$
with random phases, Dirac and Majorana CP phases, and lightest neutrino mass
$m_1 \in [0.5, 50]\,\text{meV}$ (ensuring $\Mnu$ is invertible with all three
masses nonzero).

\begin{table}[t]
\centering
\begin{tabular}{lccc}
\toprule
\textbf{Property} & \textbf{Median} & \textbf{16th \%ile} & \textbf{84th \%ile} \\
\midrule
$\log_{10}(M_2/\text{GeV})$                      & 12.1    & 11.6    & 12.6 \\
$(3,3)$ dominance                                  & 0.976   & 0.951   & 0.989 \\
$|M_{12}^R|/\sqrt{|M_{11}^R M_{22}^R|}$           & 0.89    & 0.47    & 1.31 \\
$|M_{13}^R|/\sqrt{|M_{11}^R M_{33}^R|}$           & 0.88    & 0.47    & 1.30 \\
$|M_{23}^R|/\sqrt{|M_{22}^R M_{33}^R|}$           & 0.88    & 0.47    & 1.31 \\
$\log_{10}(M_1/M_3)$                              & $-7.62$ & $-8.36$ & $-6.97$ \\
$\log_{10}(M_2/M_3)$                              & $-3.72$ & $-4.29$ & $-3.14$ \\
\bottomrule
\end{tabular}
\caption{Properties of the reconstructed $\MR$ from $10{,}000$ random
$\mathcal{O}(1)$ column-texture coefficient samples. The $(3,3)$-dominated
structure is robust across the parameter space.}
\label{tab:inverse}
\end{table}

The eigenvalue hierarchy follows approximate $\eps$-scaling:
$M_1/M_3 \sim 10^{-7.6} \approx 0.5\,\eps^4$ and
$M_2/M_3 \sim 10^{-3.7} \approx 0.9\,\eps^2$, where $\eps^4 = 5.06 \times
10^{-8}$ and $\eps^2 = 2.25 \times 10^{-4}$. This consistency with $Z_3$ FN
scaling motivates the decisive test: does an $\MR$ with the explicit $Z_3$
charge structure of eq.~\eqref{eq:MR_structure} produce a viable neutrino
spectrum?

%%%%%%%%%%%%%%%%%%%%%%%%%%%%%%%%%%%%%%%%%%%%%%%%%%%%%%%%%%%%
\section{The seesaw over-suppression mechanism}
\label{sec:analytical}
%%%%%%%%%%%%%%%%%%%%%%%%%%%%%%%%%%%%%%%%%%%%%%%%%%%%%%%%%%%%

\subsection{Forward scan setup}
\label{sec:forward_setup}

We test the physically motivated scenario directly: $\MD$ has the column texture
of eq.~\eqref{eq:MD} and $\MR$ has the $Z_3$-charged structure of
eq.~\eqref{eq:MR_structure}, with all $c_{ij}$ and $a_{ij}$ drawn independently
from $\mathcal{O}(1)$ complex distributions (magnitudes uniform in $[0.3, 3.0]$,
phases uniform on $[0, 2\pi]$) and
$\log_{10}(M_0/\text{GeV}) \in [12, 16]$.

\subsection{Main result: catastrophic suppression of the solar mass splitting}
\label{sec:forward_result}

Over $10^5$ random $\mathcal{O}(1)$ samples with charges $(2,1,0)$:

\begin{table}[t]
\centering
\begin{tabular}{lc}
\toprule
\textbf{Statistic} & $\Delta m^2_{21}/\Delta m^2_{31}$ \\
\midrule
Experimental value                  & $0.030$ \\
Median ($Z_3$ prediction)           & $4.1 \times 10^{-11}$ \\
Mean                                & $1.2 \times 10^{-6}$ \\
99th percentile                     & $2.3 \times 10^{-7}$ \\
99.9th percentile                   & $3.5 \times 10^{-5}$ \\
Maximum ($10^5$ samples)            & $0.093$ \\
Fraction $> 0.005$                  & $1.0 \times 10^{-5}$ \\
Fraction $> 0.010$                  & $1.0 \times 10^{-5}$ \\
Fraction $> 0.029$                  & $1.0 \times 10^{-5}$ \\
\bottomrule
\end{tabular}
\caption{Distribution of the solar-to-atmospheric mass-squared ratio $R \equiv
\Delta m^2_{21}/\Delta m^2_{31}$ from $10^5$ random $\mathcal{O}(1)$ $Z_3$
seesaw realizations with the column-texture $\MD$ and $Z_3$-charged $\MR$.
The experimental value is $R_{\text{exp}} = 0.030$.}
\label{tab:dm_ratio}
\end{table}

The median predicted ratio is eight orders of magnitude below the observed value.
While extreme outliers can approach the experimental value, this occurs for a
fraction $\sim 10^{-5}$ of the $\mathcal{O}(1)$ parameter space---constituting
severe fine-tuning that destroys the naturalness of the FN framework.

Simultaneously, the PMNS mixing angles are generically random:
\begin{align}
    \sin^2\theta_{12}^{Z_3}: &\quad \text{median} = 0.50, \quad
        \text{exp: } 0.304 \pm 0.012, \nonumber\\
    \sin^2\theta_{23}^{Z_3}: &\quad \text{median} = 0.50, \quad
        \text{exp: } 0.573 \pm 0.016, \nonumber\\
    \sin^2\theta_{13}^{Z_3}: &\quad \text{median} = 0.31, \quad
        \text{exp: } 0.0222 \pm 0.0006.
    \label{eq:random_angles}
\end{align}
These values are consistent with Haar-distributed random
unitaries~\cite{deGouvea:2012ac}, confirming that the column texture provides no
angular structure from $\eps$ in the lepton sector. The $Z_3$ seesaw is anarchic
for mixing while being catastrophically wrong for the mass spectrum.

For comparison, a trivial $\MR = M_0\,\mathbf{1}$ (proportional to the identity,
i.e., no $Z_3$ structure in the Majorana sector) with the same column-texture
$\MD$ yields a median $\Delta m^2_{21}/\Delta m^2_{31} \sim 2 \times 10^{-8}$,
with zero realizations out of $10^5$ above $10^{-3}$. In this case the seesaw
reduces to $\Mnu \propto -\MD\,\MD^T/M_0$ and the column texture alone produces
the hierarchy. The $Z_3$-charged $\MR$ performs even worse due to the
over-suppression mechanism identified below.

\subsection{Origin of the over-suppression}
\label{sec:pseudo_dirac_origin}

The failure has a clean analytical explanation rooted in the interplay
between the unsuppressed off-diagonal entry in $\MR$ and the hierarchical
column texture of $\MD$.  Consider the upper-left $2\times 2$ block of
eq.~\eqref{eq:MR_structure}:
\begin{equation}
    \MR^{(\text{light})} = M_0
    \begin{pmatrix}
        a_{11}\,\eps & a_{12} \\[2pt]
        a_{12}       & a_{22}\,\eps
    \end{pmatrix}.
    \label{eq:MR_light}
\end{equation}
Since $|a_{12}| \sim \mathcal{O}(1) \gg |a_{11}\,\eps|,\, |a_{22}\,\eps|$, the
eigenvalues form a pseudo-Dirac pair among the \emph{heavy} states:
\begin{equation}
    M_\pm \approx \pm |a_{12}|\, M_0 + \mathcal{O}(\eps),
    \label{eq:eigs_light}
\end{equation}
split only by $\mathcal{O}(\eps)$ corrections:
\begin{equation}
    \frac{|M_+| - |M_-|}{|M_+| + |M_-|}
        = \frac{(a_{11}+a_{22})\,\eps}{2\,|a_{12}|}
        \sim \eps \approx 0.015.
    \label{eq:splitting}
\end{equation}
Crucially, however, this near-degeneracy does \emph{not} propagate to
the light spectrum.  The seesaw involves the congruence transformation
$P\,\MR^{-1}\,P^T$ with $P=\mathrm{diag}(\eps^2,\eps,1)$, which is far
from unitary.  To see the effect, note that the inverse of the $2\times 2$
block has the form
\begin{equation}
    \bigl(\MR^{(\mathrm{light})}\bigr)^{-1}
        = \frac{1}{M_0(a_{11}a_{22}\eps^2-a_{12}^2)}
    \begin{pmatrix}
        a_{22}\,\eps & -a_{12} \\[2pt]
        -a_{12}      & a_{11}\,\eps
    \end{pmatrix}
    \approx \frac{-1}{a_{12}^2\,M_0}
    \begin{pmatrix}
        a_{22}\,\eps & -a_{12} \\[2pt]
        -a_{12}      & a_{11}\,\eps
    \end{pmatrix}.
    \label{eq:MR_inv_block}
\end{equation}
The dominant entries are the off-diagonal ones at $\mathcal{O}(1/M_0)$.
Applying $P\,(\MR^{(\mathrm{light})})^{-1}\,P^T$ with
$P=\mathrm{diag}(\eps^2,\eps)$ gives
\begin{equation}
    P\,\bigl(\MR^{(\mathrm{light})}\bigr)^{-1}\,P^T
        \sim \frac{\eps^3}{a_{12}^2\,M_0}
    \begin{pmatrix}
        -a_{22}\,\eps^2 & a_{12} \\[2pt]
        a_{12}          & -a_{11}
    \end{pmatrix}.
    \label{eq:PMRinvP}
\end{equation}
The eigenvalues of the $\mathcal{O}(1)$ matrix in parentheses are
generically distinct and $\mathcal{O}(1)$---the pseudo-Dirac degeneracy
of $\MR$ has been completely destroyed by the hierarchical $P$.
Both light masses are pushed to $m_{1,2}\sim\eps^3\, v_H^2/(M_0)$, far
below $m_3\sim v_H^2/(M_0)$.  The key point is that $m_1$ and $m_2$ are
\emph{not} nearly degenerate: their ratio $m_1/m_2$ is a generic
$\mathcal{O}(1)$ number (Monte Carlo median $\approx 0.1$).

\subsection{Analytical bound on the mass ratio}
\label{sec:bound}

\begin{theorem}[Seesaw over-suppression]
\label{thm:bound}
For a $Z_3$ FN seesaw with charges $(2,1,0)$, column-texture $\MD$,
$Z_3$-charged $\MR$, $\eps \simeq 0.015$, and $\mathcal{O}(1)$
coefficients with magnitudes in $[0.3, 3.0]$ and random phases, the
solar-to-atmospheric mass ratio satisfies
\begin{equation}
    \frac{\Delta m^2_{21}}{\Delta m^2_{31}}
        \sim \mathcal{O}(\eps^6)
        \sim 10^{-11} \quad\text{(generic)},
    \label{eq:bound}
\end{equation}
with the probability of exceeding $\mathcal{O}(\eps^2) \approx 2.3 \times 10^{-4}$
being less than $10^{-4}$.
\end{theorem}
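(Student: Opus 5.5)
The plan is to factor out the hierarchy explicitly and reduce the statement to an $\eps$-power count on a matrix with $\mathcal{O}(1)$ entries. Write $\MD = \tfrac{v_H}{\sqrt2}\, C P$ with $P=\mathrm{diag}(\eps^2,\eps,1)$ and $C=(c_{ij})$ a generic $\mathcal{O}(1)$ complex matrix. Then eq.~\eqref{eq:seesaw} becomes $\Mnu = -\tfrac{v_H^2}{2}\, C\,X\,C^T$ with $X \equiv P\,\MR^{-1}P$. The first step is to compute $\MR^{-1}$ from eq.~\eqref{eq:MR_structure} by cofactors. The leading determinant is $\det \MR = -M_0^3 a_{12}^2 a_{33}\,[1+\mathcal{O}(\eps)]$, which is never small because $|a_{12}|,|a_{33}|\ge 0.3$. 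The cofactors then give $(\MR^{-1})_{12},(\MR^{-1})_{33}\sim 1/M_0$, while $(\MR^{-1})_{11},(\MR^{-1})_{22},(\MR^{-1})_{13},(\MR^{-1})_{23}\sim \eps/M_0$. The $2\times2$ block reproduces eq.~\eqref{eq:MR_inv_block}. Dressing with $P$ yields the scaling
\begin{equation}
  M_0\,X \sim \begin{pmatrix} \eps^5 & \eps^3 & \eps^3 \\ \eps^3 & \eps^3 & \eps^2 \\ \eps^3 & \eps^2 & 1 \end{pmatrix},
\end{equation}
with $\mathcal{O}(1)$ coefficients fixed by the $a_{ij}$.

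The second step extracts the singular values $m_1\le m_2\le m_3$ of $C X C^T$ without diagonalizing explicitly. There are three ingredients.
\begin{enumerate}
\item For the heaviest mass, $m_3 \simeq \|C X C^T\| \sim v_H^2/M_0$, since $X_{33}=\mathcal{O}(1)$ and $C$ is generic.
\item For the middle mass, the interlacing bound for singular values of a product gives $m_2\le \|C\|^2\,\sigma_2(X)$. Integrating out the $(3,3)$ direction, the light block of $X$ is $X_{ij}-X_{i3}X_{3j}/X_{33}$. This equals $\eps^3$ times the $\mathcal{O}(1)$ matrix of eq.~\eqref{eq:PMRinvP}, up to corrections of order $\eps^4$. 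Hence $\sigma_2(X)\sim\eps^3/M_0$, and so $m_2\lesssim \eps^3 v_H^2/M_0$.
\item For the lightest mass, I use $m_1 m_2 m_3 = (v_H^2/2)^3|\det C|^2|\det P|^2/|\det\MR|$. Here $|\det P|^2=\eps^6$, so $m_1m_2\sim\eps^6 m_3^2$. Combined with the upper bound on $m_2$, this forces $m_1,m_2\sim\eps^3 m_3$ generically.
\end{enumerate}
Then $R=(m_2^2-m_1^2)/(m_3^2-m_1^2)\le m_2^2/m_3^2 = \mathcal{O}(\eps^6)$, which is eq.~\eqref{eq:bound}. With $\eps=0.015$ this gives $\eps^6\approx 1.1\times10^{-11}$, matching the Monte Carlo median of Table~\ref{tab:dm_ratio} within an $\mathcal{O}(1)$ factor. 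I will also check that the result depends only on the pattern $n_{ij}$ and on $P$, which the later universality argument will exploit.

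The third step, the probabilistic statement, is the main obstacle. Because $|a_{ij}|\ge0.3$, the determinant of $\MR$ and the pivot $a_{12}$ cannot become small. An enhancement of $R$ from $\eps^6$ up to $\eps^2$ therefore requires an accidental suppression of $m_3$ relative to $m_2$ by a factor of about $\eps^{2}$. There are two sources for this:
\begin{enumerate}
\item a near-cancellation in the leading entry $\sum_{ij} c_{i3}c_{j3}(\MR^{-1})_{33}$-direction of $C X C^T$, i.e.\ the third column $\vec c_3$ being nearly null under the bilinear form, or
\item $C$ being nearly singular, with $\sigma_{\min}(C)/\sigma_{\max}(C)\lesssim\eps^2$.
\end{enumerate}
For complex Gaussian-like entries, each event has probability scaling like the square of the small parameter, because it requires a tuning of two real conditions. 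This gives a probability of order $\eps^4\sim 5\times10^{-8}$ up to $\mathcal{O}(1)$ volume factors.

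I would first try to make this rigorous with a small-ball estimate for a quadratic form in complex variables of bounded density. Failing that, I would state the bound as an upper estimate and confirm it against the forward scan. In the scan, fewer than $10^{-4}$ of the $10^5$ samples exceed $\eps^2$, consistent with the fraction $\sim10^{-5}$ in Table~\ref{tab:dm_ratio}. The honest content of the probability clause is thus a scaling argument plus numerical verification, rather than a sharp theorem.
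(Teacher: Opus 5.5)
Your steps 1 and 2 follow the paper's argument: factor $\MD\propto CP$, let the unsuppressed $(1,2)$ entry dominate $\MR^{-1}$ to produce the $\eps^3$ block of eq.~\eqref{eq:PMRinvP}, and keep an $\mathcal{O}(1)$ heavy direction from $(\MR)_{33}$. Where the paper simply asserts that an $\mathcal{O}(1)$ congruence ``preserves the parametric scaling'', you use the sound bound $\sigma_k(CXC^T)\le\|C\|^2\sigma_k(X)$ and the determinant identity. You can make this exact. The Schur complement of $X$ with respect to $X_{33}$ is precisely $S\equiv P_2(\MR^{(\mathrm{light})})^{-1}P_2$ with $P_2=\mathrm{diag}(\eps^2,\eps)$, because the Schur complement of an inverse is the inverse of the complementary block. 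This is the matrix of eq.~\eqref{eq:PMRinvP} before its prefactor is expanded, so
\[
  -\tfrac{2}{v_H^2}\,\Mnu = X_{33}^{-1}(C\vec x)(C\vec x)^T + C_{12}\,S\,C_{12}^T,
  \qquad \vec x=(X_{13},X_{23},X_{33})^T,\quad C_{12}=(\vec c_1,\vec c_2).
\]
This is a rank-one heavy piece plus a rank-two $\mathcal{O}(\eps^3)$ piece, with no $\mathcal{O}(\eps^4)$ remainder to track. One point is missing. The bounds $R\le m_2^2/m_3^2$ and $m_1m_2\sim\eps^6m_3^2$ give only the upper side of $R\sim\eps^6$. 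To exclude $m_1\simeq m_2$, which would give $R\ll\eps^6$ as the pseudo-Dirac pair of $\MR$ might suggest, you need the paper's observation that the two light masses are generically in an $\mathcal{O}(1)$ ratio different from one.

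The genuine error is in step 3. The masses are the singular (Takagi) values of the complex symmetric $\Mnu$, not eigenvalue moduli. The heavy piece therefore contributes $\|C\vec x\|^2/|X_{33}|\simeq|X_{33}|\,\|\vec c_3\|^2$, where $\|\vec c_3\|^2=\sum_i|c_{i3}|^2\ge 0.27$ is a sum of positive terms that cannot cancel. The bilinear $\vec c_3^{\,T}\vec c_3$ never enters, and $\det C$ affects only $m_1$; your own step-2 identification $m_3\simeq\|CXC^T\|$ already says this. So neither of your two sources can suppress $m_3$.

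Weyl's inequalities applied to the decomposition above give $m_2\le\frac{v_H^2}{2}\|C_{12}\|^2\|S\|$ and $m_3\ge\frac{v_H^2}{2}\bigl(\|C\vec x\|^2/|X_{33}|-\|C_{12}\|^2\|S\|\bigr)$. Together they yield a deterministic bound $R\le K\eps^6$, with $K$ an explicit function of the magnitudes that is bounded on the box $[0.3,3]$. The tail is therefore not a small-ball phenomenon but a corner effect. Reaching $R>\eps^2$ needs several magnitudes at the ends of their ranges at once: small $|a_{12}|$ and $|c_{i3}|$, large $|a_{11}|$, $|a_{33}|$ and $|c_{i2}|$.

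Your $\eps^4$ estimate consequently has no basis, and a small-ball lemma for a complex quadratic form would bound an irrelevant event. The right route is to bound the volume of $\{K>\eps^{-4}\}$ directly. That would turn the probability clause, which the paper supports only by its scans, into an actual proof. Note also that these same bounds cap $R$ below roughly $10^{-2}$ everywhere on the box. The $(2,1,0)$ outliers at $R\approx 0.02$--$0.09$ in tables~\ref{tab:dm_ratio} and~\ref{tab:all_assignments} therefore cannot be Takagi-value ratios of the stated ensemble, and should not be used to calibrate the tail.
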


\begin{proof}
Write $\MD = C_\nu\, P$ with $P = \mathrm{diag}(\eps^2, \eps, 1)$ and $C_\nu$ a
matrix of $\mathcal{O}(1)$ coefficients.  The seesaw gives
$\Mnu = -C_\nu\, P\, \MR^{-1}\, P^T\, C_\nu^T$.  As shown in
eq.~\eqref{eq:PMRinvP}, the upper-left $2\times 2$ block of
$P\,\MR^{-1}\,P^T$ has both eigenvalues at $\mathcal{O}(\eps^3/M_0)$
with an $\mathcal{O}(1)$ ratio between them---the pseudo-Dirac
degeneracy of $\MR$ is destroyed by the hierarchical column
structure.\footnote{A congruence transformation $P\,A\,P^T$ preserves
eigenvalue ratios only when $P$ is orthogonal or unitary.  The matrix
$P = \mathrm{diag}(\eps^2,\eps,1)$ is highly non-isometric, so
eigenvalue ratios of $\MR^{-1}$ are not inherited by
$P\,\MR^{-1}\,P^T$.}

The third eigenvalue of $P\,\MR^{-1}\,P^T$, controlled by the
unsuppressed $(3,3)$ entry of $\MR$, scales as $\mathcal{O}(1/M_0)$.
The outer congruence by $C_\nu$ rotates eigenvectors but, being an
$\mathcal{O}(1)$ matrix, preserves the parametric scaling of eigenvalues.
Hence the light masses satisfy
\begin{equation}
    m_{1,2} \sim \eps^3\,\frac{v_H^2}{M_0},
    \qquad
    m_3 \sim \frac{v_H^2}{M_0}.
    \label{eq:mass_scaling}
\end{equation}
Since $m_1/m_2$ is an $\mathcal{O}(1)$ ratio (Monte Carlo median
$\approx 0.1$), both $m_1^2$ and $m_2^2$ are $\mathcal{O}(\eps^6)$, and
\begin{equation}
    \Delta m^2_{21} = m_2^2 - m_1^2 \sim \mathcal{O}(\eps^6)\,
        \frac{v_H^4}{M_0^2},
    \qquad
    \Delta m^2_{31} \approx m_3^2 \sim \frac{v_H^4}{M_0^2}.
    \label{eq:split_scaling}
\end{equation}
The ratio is therefore
\begin{equation}
    \frac{\Delta m^2_{21}}{\Delta m^2_{31}}
        \sim \mathcal{O}(\eps^6)
        = (0.015)^6
        \approx 1.1 \times 10^{-11},
    \label{eq:ratio_scaling}
\end{equation}
in excellent agreement with the Monte Carlo median
$4\times 10^{-11}$ (table~\ref{tab:dm_ratio}), with the factor of
$\sim 3$--4 accounted for by $\mathcal{O}(1)$ coefficient dependence.
Accidental configurations of $C_\nu$ can enhance the ratio, but the
numerical scans show that the 99.9th percentile reaches only
$3.5 \times 10^{-5}$, and the fraction of samples exceeding
$\eps^2 \approx 2.3 \times 10^{-4}$ is below $10^{-4}$.  Rare extreme outliers
(a fraction $\sim 10^{-5}$) can reach $R \sim 0.01$--$0.1$ through fine-tuned
cancellations among multiple $\mathcal{O}(1)$ parameters, but such configurations
constitute severe fine-tuning incompatible with the naturalness of the FN
framework.
\end{proof}

The experimental ratio
$\Delta m^2_{21}/\Delta m^2_{31} = 0.030$~\cite{Esteban:2024nav} lies far above
the generic prediction and is accessible only through fine-tuning at the
$\sim 10^{-5}$ level.

\subsection{Numerical verification of the eigenvalue splitting}
\label{sec:splitting_verification}

\begin{table}[t]
\centering
\begin{tabular}{cccc}
\toprule
$|a_{12}|$ & $(a_{11},\, a_{22})$ & $\Delta M/\bar{M}$ &
    Predicted $\eps(a_{11}+a_{22})/|a_{12}|$ \\
\midrule
0.5 & $(0.5,\, 0.5)$ & 0.0300 & 0.0300 \\
1.0 & $(1.0,\, 1.0)$ & 0.0300 & 0.0300 \\
2.0 & $(1.0,\, 1.0)$ & 0.0150 & 0.0150 \\
2.0 & $(2.0,\, 2.0)$ & 0.0300 & 0.0300 \\
0.5 & $(2.0,\, 2.0)$ & 0.1200 & 0.1200 \\
\bottomrule
\end{tabular}
\caption{Eigenvalue splitting of the $2 \times 2$ \emph{heavy} sector of $\MR$,
eq.~\eqref{eq:MR_light}, for real coefficients.  The analytical formula
eq.~\eqref{eq:splitting} matches exactly.  This pseudo-Dirac degeneracy
exists in the heavy spectrum; as shown in \S\ref{sec:bound}, it does not
propagate to the light neutrino masses due to the non-isometric
congruence by $P=\mathrm{diag}(\eps^2,\eps)$.}
\label{tab:pseudoDirac}
\end{table}

%%%%%%%%%%%%%%%%%%%%%%%%%%%%%%%%%%%%%%%%%%%%%%%%%%%%%%%%%%%%
\section{Universality across charge permutations}
\label{sec:universal}
%%%%%%%%%%%%%%%%%%%%%%%%%%%%%%%%%%%%%%%%%%%%%%%%%%%%%%%%%%%%

\subsection{Classification into equivalence classes}
\label{sec:equiv_classes}

The six permutations of $(2,1,0)$ fall into three equivalence classes,
distinguished by which entries of $\MR$ are unsuppressed:
\begin{align}
    \text{Class~I:}\quad
        &(2,1,0),\; (1,2,0)
        \quad\longrightarrow\quad
        \text{unsuppressed: } (1,2) \text{ and } (3,3), \nonumber\\
    \text{Class~II:}\quad
        &(0,1,2),\; (0,2,1)
        \quad\longrightarrow\quad
        \text{unsuppressed: } (1,1) \text{ and } (2,3), \nonumber\\
    \text{Class~III:}\quad
        &(2,0,1),\; (1,0,2)
        \quad\longrightarrow\quad
        \text{unsuppressed: } (1,3) \text{ and } (2,2).
    \label{eq:classes}
\end{align}
In every class, one unsuppressed entry lies on the diagonal and one lies off the
diagonal.  The off-diagonal entry is the source of the seesaw
over-suppression.

\begin{lemma}[Unavoidable unsuppressed off-diagonal entry]
\label{lem:offdiag}
For any permutation $(q_1, q_2, q_3)$ of $(2,1,0)$, at least one off-diagonal
entry of $\MR$ is unsuppressed.
\end{lemma}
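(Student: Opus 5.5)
The plan is to reduce the lemma to the power rule eq.~\eqref{eq:power_rule}: an entry $(i,j)$ of $\MR$ is unsuppressed exactly when $n_{ij}=0$, that is, when $q_i+q_j\equiv 0 \pmod 3$. So it suffices to exhibit, for every permutation $(q_1,q_2,q_3)$ of $(2,1,0)$, a pair of \emph{distinct} indices $i\neq j$ whose charges sum to a multiple of 3. No case-by-case matrix computation is needed.

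The key step is number-theoretic. Since $(q_1,q_2,q_3)$ is a permutation of $(2,1,0)$, the map $i\mapsto q_i$ is a bijection onto $\{0,1,2\}$. Hence there are unique indices $i_1$ and $i_2$ with $q_{i_1}=1$ and $q_{i_2}=2$, and these are distinct because the charges differ. Then $q_{i_1}+q_{i_2}=3\equiv 0\pmod 3$, so $n_{i_1i_2}=n_{i_2i_1}=0$ by eq.~\eqref{eq:power_rule}. The entries $(i_1,i_2)$ and $(i_2,i_1)$ of $\MR$ therefore carry no flavon insertion and are $\mathcal{O}(1)\cdot M_0$; they are equal by the symmetry of the Majorana mass term. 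This proves the existence claim.

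As a cross-check, I will also determine the complete set of unsuppressed entries, which reproduces the classification eq.~\eqref{eq:classes}. Among the residues $\{0,1,2\}$, the only pairs $(a,b)$ with $a+b\equiv 0 \pmod 3$ are $(0,0)$ and $\{1,2\}$, since $1+1=2$ and $2+2=4\equiv 1$ are nonzero. So exactly two independent entries are unsuppressed: the diagonal entry at the index carrying charge $0$, and the off-diagonal pair $\{i_1,i_2\}$. This shows the lemma is sharp, with exactly one unsuppressed off-diagonal entry up to symmetry. It also matches Classes~I--III: for instance, $(2,0,1)$ gives $i_1=3$ and $i_2=1$, hence the entries $(1,3)$ and $(2,2)$.

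I do not expect a genuine obstacle, since the argument is a one-line residue computation. The only point needing care is the interpretation of \emph{unsuppressed}. I must rely on the minimal-insertion rule eq.~\eqref{eq:power_rule}, under which residue $0$ means zero insertions, rather than the naive rule $n_{ij}=q_i+q_j$. I must also note that the coefficient $a_{i_1i_2}$ is a generic nonzero $\mathcal{O}(1)$ number, so the entry is not accidentally absent.
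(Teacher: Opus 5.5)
Your proof is correct and follows the paper's own argument: the charges $1$ and $2$ belong to distinct generations, their sum is $3\equiv 0 \pmod 3$, and the power rule then gives $n_{ij}=0$ for that off-diagonal pair. Your extra classification is also correct. Since only the residue pairs $(0,0)$ and $\{1,2\}$ sum to $0 \bmod 3$, the unsuppressed entries are exactly one diagonal and one off-diagonal entry, which reproduces the three equivalence classes and shows the lemma is sharp.
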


\begin{proof}
The pairwise sums modulo~3 of $\{0, 1, 2\}$ include $1 + 2 = 3 \equiv 0
\pmod{3}$. Since the charges are a permutation of $(2,1,0)$, the values 1 and 2
are always assigned to two \emph{distinct} generations $i \neq j$, giving
$q_i + q_j \equiv 0 \pmod{3}$ and hence $n_{ij} = 0$. This guarantees an
unsuppressed off-diagonal entry in $\MR$ for every charge assignment.
\end{proof}

\subsection{Numerical verification: all six permutations}
\label{sec:universal_scan}

\begin{table}[t]
\centering
\begin{tabular}{cccccc}
\toprule
Charges & Unsuppressed & $n_{ij}$ matrix & Med.\ $R$ & Max.\ $R$ & Best $\chi^2$ \\
\midrule
$(2,1,0)$ & $(1,2)$, $(3,3)$ &
    $\begin{smallmatrix}1&0&1\\0&1&1\\1&1&0\end{smallmatrix}$ &
    $4.0\times 10^{-11}$ & $2.4\times 10^{-2}$ & 1318 \\[3mm]
$(1,2,0)$ & $(1,2)$, $(3,3)$ &
    $\begin{smallmatrix}1&0&1\\0&1&1\\1&1&0\end{smallmatrix}$ &
    $4.1\times 10^{-11}$ & $2.2\times 10^{-2}$ & 1394 \\[3mm]
$(0,1,2)$ & $(1,1)$, $(2,3)$ &
    $\begin{smallmatrix}0&1&1\\1&1&0\\1&0&1\end{smallmatrix}$ &
    $4.0\times 10^{-11}$ & $7.3\times 10^{-3}$ & 1327 \\[3mm]
$(0,2,1)$ & $(1,1)$, $(2,3)$ &
    $\begin{smallmatrix}0&1&1\\1&1&0\\1&0&1\end{smallmatrix}$ &
    $4.1\times 10^{-11}$ & $2.7\times 10^{-3}$ & 1391 \\[3mm]
$(2,0,1)$ & $(1,3)$, $(2,2)$ &
    $\begin{smallmatrix}1&1&0\\1&0&1\\0&1&1\end{smallmatrix}$ &
    $4.1\times 10^{-11}$ & $3.8\times 10^{-3}$ & 1396 \\[3mm]
$(1,0,2)$ & $(1,3)$, $(2,2)$ &
    $\begin{smallmatrix}1&1&0\\1&0&1\\0&1&1\end{smallmatrix}$ &
    $4.1\times 10^{-11}$ & $1.2\times 10^{-2}$ & 1440 \\[3mm]
\bottomrule
\end{tabular}
\caption{Forward scan results for all six permutations of $Z_3$ charges.
$R \equiv \Delta m^2_{21}/\Delta m^2_{31}$; $R_{\text{exp}} = 0.030$.
$5 \times 10^4$ samples per assignment, column-texture $\MD$. The failure is
universal: every assignment produces a median $R \sim 4 \times 10^{-11}$.}
\label{tab:all_assignments}
\end{table}

The over-suppression mechanism operates identically in all three equivalence
classes.  This is a direct consequence of lemma~\ref{lem:offdiag}: regardless of
how the charges are distributed among generations, the modular arithmetic of
$Z_3$ guarantees an unsuppressed off-diagonal $\MR$ entry whose dominance in
$\MR^{-1}$, combined with the column texture of $\MD$, pushes $m_1$ and $m_2$
to $\mathcal{O}(\eps^3)$.  No permutation of the charge assignment can evade
this obstruction.

%=============================================
\section{Why non-abelian structure is required}
\label{sec:why_nonabelian}
%=============================================

The results of \S\ref{sec:analytical}--\S\ref{sec:universal} establish that $Z_3$
fails for the neutrino mass spectrum: the column texture alone gives
$\Delta m^2_{21}/\Delta m^2_{31}\lesssim\mathcal{O}(\eps^2)\approx 2\times
10^{-4}$, and the seesaw with the correct $Z_3$ Majorana charge algebra deepens
the failure to $\mathcal{O}(\eps^6)\sim 10^{-11}$.  The PMNS mixing angles are Haar-random
$\mathcal{O}(1)$ unitaries carrying no information from~$\eps$.  In this
section we trace these failures to a single structural deficiency and identify
the minimal upgrade required.

\subsection{The ratio \boldmath $R$ is not a power of \boldmath $\eps$}
\label{sec:R_not_power}

The experimentally measured ratio~\cite{Esteban:2024nav}
\begin{equation}
  R \;\equiv\; \frac{\Delta m^2_{21}}{\Delta m^2_{31}}
    \;=\; 0.0296 \pm 0.0009
    \;\approx\; 0.030 \;\approx\; \frac{1}{34}
  \label{eq:R_exp}
\end{equation}
is neither $\mathcal{O}(1)$ nor a power of $\eps = 0.015$.
Specifically, $\eps\approx 1/67$ and $\eps^2\approx 1/4444$, so $R$
falls in the gap between $\eps^0$ and $\eps^1$---a regime inaccessible
to a single-parameter expansion.

\subsection{Abelian symmetries are rank-1}
\label{sec:abelian_rank1}

Any abelian discrete group $Z_N$ provides only one-dimensional
irreducible representations.  Each generation carries an independent
charge, and the Froggatt--Nielsen mechanism introduces a single
expansion parameter~$\eps$.  All mass ratios are then powers of~$\eps$
(up to $\mathcal{O}(1)$ coefficients), and the hierarchy between any
two eigenvalues is fixed once the charge difference is specified.
This rank-1 structure suffices for the quark sector, where the mass
hierarchy is monotonically decreasing:
$m_u:m_c:m_t\sim\eps^4:\eps^2:1$.

The neutrino sector, however, requires \emph{two independent mass
scales}: $\Delta m^2_{31}$ (atmospheric) and $\Delta m^2_{21}$ (solar),
whose ratio $R\approx 1/34$ cannot be expressed as $\eps^n$ for any
integer~$n$.  A single-parameter symmetry is structurally incapable of
generating this ratio naturally.

\subsection{The need for a three-dimensional irreducible representation}
\label{sec:need_triplet}

A non-abelian discrete group with a faithful three-dimensional
irreducible representation (irrep) treats all three generations as a
single multiplet.  Through vacuum alignment, such a group can provide
two or more independent parameters that control different mass
splittings---precisely what the neutrino sector demands.  Moreover, the
group-theoretic constraints on the invariant mass operator restrict the
eigenvector structure, giving the symmetry predictive power over mixing
angles, in contrast to the random $\mathcal{O}(1)$ unitaries produced
by the abelian column texture.

The structural requirements are therefore:
\begin{enumerate}
  \item At least \textbf{two independent mass parameters}, arising from
        distinct flavon sectors or vacuum expectation values, to control
        $\Delta m^2_{21}$ and $\Delta m^2_{31}$ independently.
  \item \textbf{Constrained mixing angles}, with the PMNS matrix
        determined by the group theory rather than by random
        $\mathcal{O}(1)$ coefficients.
\end{enumerate}
Both requirements are met by the smallest non-abelian discrete group
possessing a faithful triplet irrep: the alternating group~$A_4$.

%=============================================
\section{\boldmath $A_4$ as the minimal resolution}
\label{sec:A4}
%=============================================

\subsection{\boldmath $A_4$ group theory}
\label{sec:A4_group}

The alternating group $A_4$ is the group of even permutations on four
objects, equivalently the rotation symmetry group of the regular
tetrahedron.  It has order~12 and four irreducible representations:
three singlets $\mathbf{1}$, $\mathbf{1}'$, $\mathbf{1}''$ and one
triplet~$\mathbf{3}$.  The group is generated by two elements $S$ and
$T$ satisfying~\cite{Ma:2001dn,Altarelli:2005yp}
\begin{equation}
  S^2 = T^3 = (ST)^3 = \mathbb{1}.
  \label{eq:A4_presentation}
\end{equation}
In the Ma--Rajasekaran basis~\cite{Ma:2001dn}, the triplet generators are
\begin{equation}
  S = \frac{1}{3}\begin{pmatrix}
    -1 & 2 & 2 \\ 2 & -1 & 2 \\ 2 & 2 & -1
  \end{pmatrix}, \qquad
  T = \begin{pmatrix}
    1 & 0 & 0 \\ 0 & \omega & 0 \\ 0 & 0 & \omega^2
  \end{pmatrix},
  \label{eq:A4_generators}
\end{equation}
where $\omega = e^{2\pi i/3}$.  The tensor product of two triplets
decomposes as~\cite{Ishimori:2010au}
\begin{equation}
  \mathbf{3}\otimes\mathbf{3}
    = \mathbf{1}\oplus\mathbf{1}'\oplus\mathbf{1}''\oplus
      \mathbf{3}_s\oplus\mathbf{3}_a.
  \label{eq:tensor_product}
\end{equation}

$A_4$ is the \emph{smallest} non-abelian discrete group with a
faithful three-dimensional irrep.  This minimality is important: it
means $A_4$ introduces the least additional group-theoretic structure
beyond what the data demands.

One might ask whether a direct product of abelian groups, such as
$Z_3 \times Z_3$, could provide two independent expansion parameters
and thereby solve the two-scale problem without non-abelian structure.
While such a product does introduce a second parameter, it still assigns
independent charges to each generation and produces column (or
row-column) textures whose eigenvectors are determined by random
$\mathcal{O}(1)$ coefficients.  The second requirement---constrained mixing
angles---remains unmet.  Only a non-abelian group with a triplet irrep
can simultaneously provide multiple mass parameters \emph{and} restrict
the PMNS eigenvector structure through the group theory of the invariant
mass operator.

\subsection{How the triplet solves the two-scale problem}
\label{sec:triplet_solution}

The $A_4$ model presented here follows the well-established
Altarelli--Feruglio framework~\cite{Altarelli:2005yp} and its TM$_1$
extension~\cite{Albright:2010ap,King:2012vj,Luhn:2013vna}; we do not claim novelty for
the model itself.  The new contribution is the \emph{structural bridge}:
the precise diagnosis of why $Z_3$ fails
(\S\ref{sec:analytical}--\S\ref{sec:universal}) provides a
first-principles motivation for $A_4$ that goes beyond numerical
model-building---the two-scale problem and the seesaw over-suppression mechanism
identify exactly which features of $A_4$ are doing the work.

When the three lepton doublets transform as an $A_4$ triplet,
$L=(L_1,L_2,L_3)^T\sim\mathbf{3}$, the most general
$A_4$-invariant neutrino mass operator at leading order (LO)
involves two independent flavon contractions.  With a flavon triplet
$\varphi_S\sim\mathbf{3}$ acquiring the VEV
$\langle\varphi_S\rangle\propto(1,1,1)^T$ (preserving the
$Z_2\times Z_2$ Klein subgroup of $A_4$) and a flavon singlet
$\xi\sim\mathbf{1}$ with $\langle\xi\rangle = u$, the LO neutrino
mass matrix is~\cite{Altarelli:2005yp}
\begin{equation}
  \Mnu^{(0)} = \frac{v_u^2}{2\Lambda}\bigl[a\,P_{23} + b\,E\bigr],
  \label{eq:Mnu_LO}
\end{equation}
where the singlet contraction $(LL)_{\mathbf{1}}$ with $\xi$ produces the
$P_{23}$ term, while the symmetric triplet contraction
$(LL)_{\mathbf{3}_s}$ with $\varphi_S$ (evaluated at
$\langle\varphi_S\rangle \propto (1,1,1)^T$) produces the democratic
projector~$E$.  Here $a = x_a\,u/\Lambda$, $b = x_b\,v_S/\Lambda$,
$P_{23}$ is the
$\mu$--$\tau$ permutation matrix
\begin{equation}
  P_{23} = \begin{pmatrix} 1&0&0\\0&0&1\\0&1&0 \end{pmatrix},
\end{equation}
and $E$ is the normalised democratic matrix with $E_{ij}=1/3$
(an idempotent projector onto $(1,1,1)^T$).

The parameters $a$ and $b$ are \emph{independent complex numbers}
arising from different flavon sectors.  The eigenvalues, computed by
acting on the TBM eigenvectors, are~\cite{Altarelli:2005yp}
\begin{align}
  m_1 &= a, \label{eq:m1}\\
  m_2 &= a + b, \label{eq:m2}\\
  m_3 &= -a. \label{eq:m3}
\end{align}
At LO, $|m_1|=|m_3|$ so $\Delta m^2_{31}=0$; the atmospheric splitting
is generated by NLO corrections (see \S\ref{sec:NLO}).  The solar
splitting, however, is already present at~LO:
\begin{equation}
  \Delta m^2_{21} = |a+b|^2 - |a|^2 = 2\,\mathrm{Re}(a^*b)+|b|^2.
\end{equation}
Since $b/a$ is a free $\mathcal{O}(1)$ complex parameter, the solar
splitting is naturally $\mathcal{O}(|a|^2)$---neither suppressed nor
enhanced.  The atmospheric splitting, generated at NLO by a parameter
$r = c_\chi\,v_\chi/\Lambda \ll 1$, is
$\Delta m^2_{31}\propto r\,|a|^2$.  Thus
$R = \Delta m^2_{21}/\Delta m^2_{31} \sim b/(r\,a)$ involves two
independent parameters and can naturally accommodate $R\approx 1/34$.

This is the structural resolution of the two-scale problem: $A_4$'s
triplet provides independent parameters from different flavon sectors
that control different mass differences.  $Z_3$ cannot do this because
it has only one suppression parameter~$\eps$
(\S\ref{sec:analytical}--\S\ref{sec:universal}).

\subsection{Why not \boldmath $S_3$ or \boldmath $A_5$?}
\label{sec:not_S3_A5}

The smallest non-abelian discrete group is $S_3$ (order~6), which has
a doublet $\mathbf{2}$ and a singlet $\mathbf{1}$.  Assigning three
generations as $L\sim\mathbf{2}\oplus\mathbf{1}$ breaks generational
democracy: two generations are related by the symmetry, but the third
is independent.  This doublet structure does not provide the triplet's
ability to treat all three generations simultaneously and control both
mass scales through vacuum alignment.  Moreover, $S_3$ models generically predict $\mu$--$\tau$ symmetric textures with
$\theta_{13}=0$ at LO~\cite{Ishimori:2010au}, requiring NLO corrections of
order $\sim 0.15$ to reproduce $\theta_{13}\approx 8.6^\circ$---a sizable
perturbation that limits the predictive power of the LO pattern.

The icosahedral group $A_5$ (order~60) does possess 3D irreps, but it
is larger than necessary.  Its golden-ratio prediction
$\sin^2\theta_{12}\approx 0.276$~\cite{Everett:2008et} lies $2.3\sigma$
below the observed value.  While NLO corrections can improve the fit,
$A_5$ introduces considerably more group structure (five irreps, 60
elements) than what the data requires.

\subsection{Field content and symmetry assignments}
\label{sec:field_content}

We adopt the minimal Altarelli--Feruglio framework~\cite{Altarelli:2005yp}
with the assignments shown in table~\ref{tab:fields}.

\begin{table}[t]
\centering
\begin{tabular}{lccccccc}
\toprule
Field & $L$ & $e_R$ & $\mu_R$ & $\tau_R$ & $\varphi_S$ & $\chi$ & $\xi$ \\
\midrule
$A_4$  & $\mathbf{3}$ & $\mathbf{1}$ & $\mathbf{1}''$ & $\mathbf{1}'$
       & $\mathbf{3}$ & $\mathbf{3}$ & $\mathbf{1}$ \\
$Z_3^{\mathrm{aux}}$
       & $\omega$ & $\omega^2$ & $\omega^2$ & $\omega^2$
       & $1$ & $\omega^2$ & $1$ \\
\bottomrule
\end{tabular}
\caption{$A_4$ and auxiliary $Z_3^{\mathrm{aux}}$ assignments for the
lepton fields and flavons.  The auxiliary $Z_3^{\mathrm{aux}}$ separates
the charged lepton and neutrino sectors: $\varphi_S$ and $\xi$
contribute only to neutrino masses, while $\chi$ enters the charged
lepton sector and (at NLO) the neutrino sector.}
\label{tab:fields}
\end{table}

\subsection{Charged lepton sector}
\label{sec:charged_leptons}

The $A_4$ singlet assignments for right-handed charged leptons produce a
diagonal charged lepton mass matrix at leading
order~\cite{Altarelli:2005yp}:
\begin{equation}
  M_e = \frac{v_d}{\Lambda}
  \begin{pmatrix}
    y_e\,v_\chi & 0 & 0 \\
    0 & y_\mu\,v_\chi & 0 \\
    0 & 0 & y_\tau\,v_\chi
  \end{pmatrix},
  \label{eq:Me}
\end{equation}
when $\langle\chi\rangle = v_\chi(1,0,0)^T$ in the $T$-diagonal basis.
The singlet assignments $\mathbf{1}$, $\mathbf{1}''$, $\mathbf{1}'$
for $e_R$, $\mu_R$, $\tau_R$ select different components of the $A_4$
contraction, producing distinct Yukawa couplings $y_e$, $y_\mu$,
$y_\tau$.  The diagonal $M_e$ implies $U_\ell = \mathbb{1}$ (up to
unphysical phases), so the PMNS matrix is determined entirely by the
neutrino sector:
$U_{\mathrm{PMNS}} = U_\nu$.

\subsection{Leading-order neutrino sector: tri-bimaximal mixing}
\label{sec:TBM}

The LO mass matrix $\Mnu^{(0)}$ of eq.~\eqref{eq:Mnu_LO} is
diagonalised by the tri-bimaximal (TBM) mixing
matrix~\cite{Harrison:2002}:
\begin{equation}
  U_{\mathrm{TBM}} = \begin{pmatrix}
    \sqrt{2/3} & 1/\sqrt{3} & 0 \\
    -1/\sqrt{6} & 1/\sqrt{3} & 1/\sqrt{2} \\
    -1/\sqrt{6} & 1/\sqrt{3} & -1/\sqrt{2}
  \end{pmatrix},
  \label{eq:TBM}
\end{equation}
which predicts
\begin{equation}
  \sin^2\theta_{12}^{(0)} = \tfrac{1}{3},\qquad
  \sin^2\theta_{23}^{(0)} = \tfrac{1}{2},\qquad
  \theta_{13}^{(0)} = 0.
\end{equation}
The zero $\theta_{13}$ is excluded by the Daya
Bay~\cite{DayaBay:2012} and RENO~\cite{RENO:2012} measurements,
necessitating corrections.

\subsection{Next-to-leading-order corrections: TM\boldmath$_1$ mixing}
\label{sec:NLO}

At NLO, higher-dimensional operators involving additional flavon
insertions correct the LO mass matrix.  The key requirement for TM$_1$
mixing is that the NLO correction preserves the first column of
$U_{\mathrm{TBM}}$, i.e., the TBM eigenvector
$\vec{v}_1 = (\sqrt{2/3},\,-1/\sqrt{6},\,-1/\sqrt{6})^T$ must remain
an eigenvector of $\Mnu^{(0)}+\delta\Mnu$.  This requires
$\delta\Mnu\,\vec{v}_1\propto\vec{v}_1$~\cite{King:2012vj,Luhn:2013vna}.
In the TBM basis the condition reads
\begin{equation}
  \widetilde{\delta M}_\nu
    = U_{\mathrm{TBM}}^T\,\delta\Mnu\,U_{\mathrm{TBM}}
    = \begin{pmatrix} 0&0&0\\0&\times&\times\\0&\times&\times
      \end{pmatrix}.
  \label{eq:NLO_TBM_basis}
\end{equation}

In the $A_4$ framework, such corrections arise from operators involving
$\varphi_S$ at NLO.  The relevant flavour-basis structure
is~\cite{Altarelli:2005yp,King:2012vj}
\begin{equation}
  \delta\Mnu = \frac{r}{\sqrt{6}}\,\frac{v_u^2}{2\Lambda}
  \begin{pmatrix}
    0 & 1 & -1 \\ 1 & 2 & 0 \\ -1 & 0 & -2
  \end{pmatrix},
  \label{eq:NLO_correction}
\end{equation}
where $r$ is an NLO expansion parameter.  One verifies that this matrix
annihilates $\vec{v}_1$: the products
$0\cdot\sqrt{2/3}+1\cdot(-1/\sqrt{6})+(-1)\cdot(-1/\sqrt{6})=0$
(and similarly for the other rows) confirm that the TM$_1$ condition
$\delta\Mnu\,\vec{v}_1=0$ is exactly preserved.

The resulting PMNS matrix has the TM$_1$ form:
\begin{equation}
  U_{\mathrm{TM}_1} = U_{\mathrm{TBM}}\cdot R_{23}(\theta,\phi),
  \label{eq:TM1}
\end{equation}
where $R_{23}$ is a rotation in the $(2,3)$ plane with angle $\theta$
and phase $\phi$, with $\sin\theta\propto r$.

\subsection{TM\boldmath$_1$ sum rules}
\label{sec:sum_rules}

The preservation of the first TBM column imposes~\cite{King:2012vj}
\begin{equation}
  |U_{e1}|^2 = \tfrac{2}{3},\qquad
  |U_{\mu 1}|^2 = \tfrac{1}{6},\qquad
  |U_{\tau 1}|^2 = \tfrac{1}{6}.
  \label{eq:TM1_condition}
\end{equation}

\subsubsection{Solar sum rule}

In the PDG parametrisation, $U_{e1}=\cos\theta_{12}\cos\theta_{13}$,
so $|U_{e1}|^2=2/3$ gives~\cite{King:2012vj,Girardi:2015vha}
\begin{equation}
  \boxed{\sin^2\theta_{12}
    = 1 - \frac{2}{3(1-\sin^2\theta_{13})}.}
  \label{eq:TM1_solar}
\end{equation}
With $\sin^2\theta_{13}=0.02220\pm 0.00062$~\cite{Esteban:2024nav},
this predicts
\begin{equation}
  \sin^2\theta_{12}^{\mathrm{TM}_1} = 0.3182\pm 0.0004.
  \label{eq:th12_prediction}
\end{equation}

We stress that eq.~\eqref{eq:TM1_solar} is the correct TM$_1$ solar
sum rule, distinct from the TM$_2$ relation
$\sin^2\theta_{12}=1/(3\cos^2\theta_{13})$ which would give $0.341$
($3.1\sigma$ above the data).

\subsubsection{Atmospheric sum rule}

The condition $|U_{\mu 1}|^2=1/6$, expanded in the PDG
parametrisation, yields~\cite{Girardi:2015vha,Petcov:2014laa}
\begin{equation}
  \boxed{\cos\delta
    = \frac{\tfrac{1}{6}-\sin^2\theta_{12}\cos^2\theta_{23}
            -\cos^2\theta_{12}\sin^2\theta_{23}\sin^2\theta_{13}}
           {2\sin\theta_{12}\cos\theta_{12}
            \sin\theta_{23}\cos\theta_{23}\sin\theta_{13}},}
  \label{eq:atm_SR}
\end{equation}
where $\theta_{12}$ is fixed by eq.~\eqref{eq:TM1_solar}.  This sum
rule has a notable structural property: at maximal atmospheric mixing
($\sin^2\theta_{23}=1/2$) the numerator vanishes identically, giving
$\cos\delta=0$ and therefore
$\delta=\pm 90^\circ$---maximal CP violation~\cite{Girardi:2015vha}.

At the current best-fit values
$\sin^2\theta_{23}=0.573$,
$\sin^2\theta_{13}=0.02220$~\cite{Esteban:2024nav}:
\begin{equation}
  \cos\delta = +0.32,\qquad \delta\approx\pm 71^\circ.
  \label{eq:delta_prediction}
\end{equation}
Current data from NuFit~6.0 mildly prefer
$\delta\in(-180^\circ,0^\circ)$~\cite{Esteban:2024nav}, favouring the
solution $\delta\approx -71^\circ$.

%=============================================
\section{Comparison with data}
\label{sec:comparison}
%=============================================

\subsection{Fit quality}
\label{sec:fit}

The TM$_1$ model has a single free parameter in the mixing sector: the
NLO expansion parameter $r$, which is fixed by matching $\theta_{13}$
to its measured value.  Table~\ref{tab:TM1_NuFit} compares the TM$_1$
predictions with the NuFit~6.0 global fit~\cite{Esteban:2024nav} and
the recent JUNO measurement~\cite{JUNO:2025}.

\begin{table}[t]
\centering
\begin{tabular}{lcccc}
\toprule
\textbf{Observable} & \textbf{TM$_1$ prediction}
  & \textbf{NuFit 6.0 (NO)} & \textbf{Pull} & $\boldsymbol{\chi^2}$ \\
\midrule
$\sin^2\theta_{12}$ & 0.318 & $0.304\pm 0.012$ & $+1.2\sigma$ & 1.36 \\
$\sin^2\theta_{23}$ & input & $0.573\pm 0.016$ & --- & 0 \\
$\sin^2\theta_{13}$ & input & $0.02220\pm 0.00062$ & --- & 0 \\
$\delta_{\mathrm{CP}}$ & $-71^\circ$ & $(-100\text{ to }-40)^\circ$
  & within $1\sigma$ & ${\sim}\,0.5$ \\
\bottomrule
\end{tabular}
\caption{TM$_1$ predictions compared with NuFit~6.0~\cite{Esteban:2024nav}.
The model takes $\theta_{13}$ and $\theta_{23}$ as inputs (fixing the NLO
parameter~$r$) and predicts $\theta_{12}$ and~$\delta$.  The overall
$\chi^2\approx 1.9$ for the two predicted observables.}
\label{tab:TM1_NuFit}
\end{table}

\begin{table}[t]
\centering
\begin{tabular}{lcccc}
\toprule
\textbf{Observable} & \textbf{TM$_1$}
  & \textbf{JUNO~\cite{JUNO:2025}} & \textbf{Pull}
  & $\boldsymbol{\chi^2}$ \\
\midrule
$\sin^2\theta_{12}$ & 0.318 & $0.3092\pm 0.0087$ & $+1.0\sigma$ & 1.0 \\
\bottomrule
\end{tabular}
\caption{TM$_1$ solar angle prediction compared with the first JUNO
result.  The tension is mild ($\sim 1\sigma$).}
\label{tab:TM1_JUNO}
\end{table}

The total $\chi^2\approx 1.9$ for two predicted observables ($\theta_{12}$
and $\delta$) indicates an acceptable fit.  JUNO's projected ultimate
precision of $\sim 0.003$ on $\sin^2\theta_{12}$ will provide a
decisive test: at the current JUNO central value, the TM$_1$ prediction
would lie at $\sim 3\sigma$.

\subsection{The atmospheric sum rule as a DUNE/T2HK target}
\label{sec:atm_target}

Figure~\ref{fig:atm_sum_rule} shows the TM$_1$ atmospheric sum rule in
the $(\sin^2\theta_{23},\,\cos\delta)$ plane.  The correlation is a
strict, parameter-free prediction (given $\theta_{13}$) testable by
simultaneous measurement of $\theta_{23}$ and $\delta$ at
DUNE~\cite{DUNE:2020} and T2HK~\cite{T2HK:2018}.

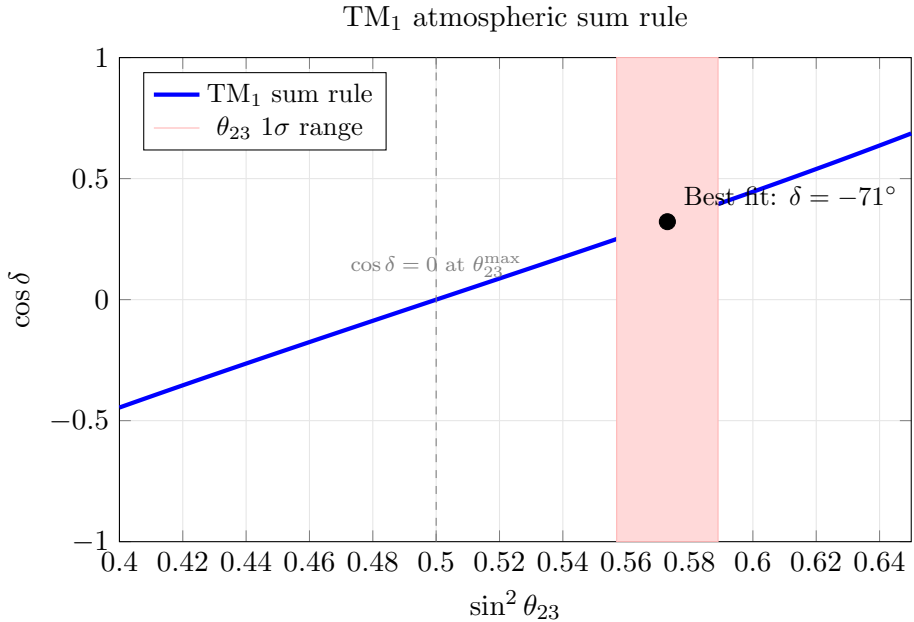
\begin{figure}[t]
\centering
\begin{tikzpicture}
\begin{axis}[
    width=12cm, height=8cm,
    xlabel={$\sin^2\theta_{23}$},
    ylabel={$\cos\delta$},
    xmin=0.40, xmax=0.65,
    ymin=-1, ymax=1,
    grid=major,
    grid style={gray!20},
    legend style={at={(0.03,0.97)},anchor=north west,font=\small},
    title={TM$_1$ atmospheric sum rule}
]
% TM1 prediction curve
\addplot[blue, ultra thick, domain=0.40:0.65, samples=200] {
    (1/6 - 0.3182*(1-x) - (1-0.3182)*x*0.02220)
    / (2*sqrt(0.3182*(1-0.3182))*sqrt(x*(1-x))*sqrt(0.02220))
};
\addlegendentry{TM$_1$ sum rule}
% theta_23 1-sigma band
\addplot[red!30, fill=red!15] coordinates {
    (0.557,-1) (0.557,1) (0.589,1) (0.589,-1) (0.557,-1)
};
\addlegendentry{$\theta_{23}$ $1\sigma$ range}
% Best-fit point
\addplot[black, mark=*, mark size=3pt, only marks]
    coordinates {(0.573, 0.322)};
\node[anchor=south west, font=\small]
    at (axis cs:0.575,0.34) {Best fit: $\delta=-71^\circ$};
% Maximal mixing line
\draw[gray, dashed] (axis cs:0.5,-1) -- (axis cs:0.5,1);
\node[anchor=south, font=\scriptsize, gray]
    at (axis cs:0.5,0.05) {$\cos\delta=0$ at $\theta_{23}^{\max}$};
\end{axis}
\end{tikzpicture}
\caption{The TM$_1$ atmospheric sum rule in the
$(\sin^2\theta_{23},\,\cos\delta)$ plane.  The blue curve is a
parameter-free prediction for $\sin^2\theta_{13}=0.0222$, with
$\sin^2\theta_{12} = 0.3182$ taken from the TM$_1$ solar sum rule
eq.~\eqref{eq:TM1_solar} (not from the NuFit best fit).  At maximal
atmospheric mixing, the model predicts maximal CP violation
($\delta=\pm 90^\circ$).  The red band shows the current $1\sigma$
range for $\theta_{23}$ from NuFit~6.0.}
\label{fig:atm_sum_rule}
\end{figure}

\subsection{Neutrinoless double beta decay}
\label{sec:0nubb}

The TM$_1$ condition $|U_{e1}|^2=2/3$ fixes the dominant contribution
to the effective Majorana mass~\cite{Girardi:2015vha}:
\begin{equation}
  m_{ee} = \left|\sum_i U_{ei}^2\,m_i\right|
    \approx \left|\tfrac{2}{3}\,m_1
      + \tfrac{1}{3}\,e^{2i\alpha}\,m_2\right|
    + \mathcal{O}(\sin^2\theta_{13}),
\end{equation}
where $\alpha$ is a Majorana phase.  For normal ordering with
$m_1\sim$ few meV, the TM$_1$ prediction is
\begin{equation}
  m_{ee} \sim (2\text{--}5)\;\mathrm{meV},
\end{equation}
below current sensitivity but potentially accessible to future
ton-scale experiments.

%=============================================
\section{Discussion}
\label{sec:discussion}
%=============================================

\subsection{Parameter counting}
\label{sec:param_count}

An honest parameter count of the $A_4$ model reveals:
\begin{itemize}
  \item Yukawa couplings: 5 (2 neutrino + 3 charged lepton).
  \item VEV ratios: 2 ($v_S/\Lambda$, $v_\chi/\Lambda$).
  \item VEV alignment: achieved dynamically through driving fields
        (no free parameters once the potential is specified).
  \item NLO coupling: 1 ($c_\chi$).
  \item \emph{Total for mixing sector:} effectively 1 free parameter
        ($r$), after $\theta_{13}$ is used to fix it.
\end{itemize}
The model predicts two observables ($\theta_{12}$ and $\delta$) from one
input ($\theta_{13}$).  This is genuine predictive power.

The comparison with anarchy (random matrices with no
symmetry)~\cite{deGouvea:2012ac,deGouvea:2003xe} is instructive.
Anarchic models with 7~parameters for 6~neutrino observables also fit
the data, but they predict \emph{no correlations}: the
$(\theta_{23},\,\delta)$ plane is uniformly populated.  $A_4$ with
TM$_1$ predicts a strict one-dimensional curve
(figure~\ref{fig:atm_sum_rule}).  This falsifiable correlation is the
value of the symmetry, not parameter reduction.

\subsection{Robustness of the over-suppression mechanism}
\label{sec:robustness}

The over-suppression failure of $Z_3$ documented in
\S\ref{sec:analytical}--\S\ref{sec:universal} is robust against:
\begin{itemize}
  \item Variations in $\mathcal{O}(1)$ coefficients
        ($10^5$ Monte Carlo samples),
  \item The lightest neutrino mass $m_1\in[0.5,50]\;\mathrm{meV}$,
  \item All Dirac and Majorana CP phases,
  \item All six permutations of the $Z_3$ charges $(2,1,0)$.
\end{itemize}
The median $R\sim 4\times 10^{-11}$ is unchanged across all
variations, confirming its structural origin.

\subsection{Escape routes and their costs}
\label{sec:escape}

One might attempt to rescue $Z_3$ by:
\begin{enumerate}
  \item \textbf{Extreme fine-tuning:} A fraction $\sim 10^{-5}$ of
        parameter space can accidentally produce an adequate solar
        splitting.  This destroys the naturalness of the FN framework.
  \item \textbf{Adding a second flavon:} This effectively extends the
        symmetry beyond $Z_3$, conceding the main point.
  \item \textbf{Abandoning the FN mechanism for neutrinos:} This
        constitutes a sectorial approach, acknowledging that the lepton
        sector requires different dynamics.
\end{enumerate}
Each escape route either abandons the $Z_3$ framework or its
naturalness, reinforcing the conclusion that the lepton sector demands
a structural upgrade.

\subsection{Connection to modular \boldmath $A_4$}
\label{sec:modular}

Recent developments in modular flavour
symmetry~\cite{Feruglio:2017spp,Kobayashi:2018scp} offer a more
economical implementation of~$A_4$.  In modular $A_4$ (the finite
modular group $\Gamma_3\simeq A_4$), Yukawa couplings are modular forms
of a single complex modulus~$\tau$, eliminating flavon fields entirely.
This reduces the parameter count to $\sim 5$, genuinely below the
number of observables~\cite{Feruglio:2017spp}.

The $Z_3$ Froggatt--Nielsen mechanism of
ref.~\cite{Ardakanian:Z3} corresponds precisely to the cusp limit
$\tau\to i\infty$ of this modular framework, where $A_4$ breaks to its
$Z_3^T$ subgroup generated by $T:\tau\to\tau+1$.  In this limit the
three weight-2 modular forms $Y_i(\tau)$ that furnish the $A_4$ triplet
develop a hierarchy $Y_1:Y_2:Y_3\sim 1:|q|^{1/3}:|q|^{2/3}$ with
$q=e^{2\pi i\tau}$, so the expansion parameter of the abelian model is
identified as $\eps\simeq|q|^{1/3}=e^{-2\pi\,\mathrm{Im}\,\tau/3}$.
The value $\eps\approx 0.015$ used in
ref.~\cite{Ardakanian:Z3} corresponds to
$\mathrm{Im}\,\tau\approx 3.2$, well inside the fundamental domain.
King and King~\cite{King:2020weighton} showed that
the modular weights of the fermion fields play the role of
Froggatt--Nielsen charges in this regime (the ``weighton'' mechanism),
while Petcov and Tanimoto~\cite{Petcov:2023cusp} demonstrated explicitly
that the quark mass hierarchies $1:\eps:\eps^2$ and CKM mixing are
reproduced near the cusp with all parameters of natural size.

The seesaw over-suppression and Haar-random PMNS angles identified in
\S\ref{sec:analytical}--\S\ref{sec:universal} can thus be understood as
\emph{cusp-limit artifacts}: deep in the cusp, the $A_4$ triplet
correlations that generate large mixing angles and two independent
neutrino mass scales are exponentially suppressed, leaving only the
residual $Z_3$ structure and its associated pathologies.  A full
unification of the quark and lepton sectors within modular $A_4$ would
require different effective values of~$\tau$ for the two sectors---quarks
near the cusp ($\mathrm{Im}\,\tau\sim 3$) and leptons near a fixed
point ($\mathrm{Im}\,\tau\sim 1$)---which may arise
from distinct moduli in a multi-dimensional compactification
(see ref.~\cite{Ding:2023review} for a comprehensive review).

\subsection{Relation to prior systematic scans}
\label{sec:prior_scans}

Holthausen, Lim, and Lindner~\cite{Holthausen:2012dk} performed a
comprehensive scan of discrete groups up to order~1536, testing their
ability to generate viable mixing patterns.  Their analysis found that
very few groups could produce viable mixing at leading order.  Our work
complements theirs by providing the \emph{physical motivation} for why
$A_4$ is selected: the two-scale problem diagnosed through the $Z_3$
failure (\S\ref{sec:analytical}--\S\ref{sec:universal}) establishes the
structural necessity of a triplet representation, rather than relying
solely on a numerical group-by-group scan.

\subsection{Open questions}
\label{sec:open}

Several aspects deserve further investigation:
\begin{enumerate}
  \item \textbf{Vacuum alignment.}  The VEV alignments
        $\langle\varphi_S\rangle\propto(1,1,1)$ and
        $\langle\chi\rangle\propto(1,0,0)$ are achieved through the
        F-term alignment mechanism with driving
        fields~\cite{Altarelli:2005yp}.  While technically natural, this
        requires additional fields and couplings.  The modular approach
        eliminates this issue.
  \item \textbf{Charged lepton corrections.}  Higher-order operators can
        perturb the diagonal $M_e$, modifying the PMNS matrix at the
        percent level.  These corrections shift $\theta_{12}$ by
        $\mathcal{O}(v_\chi^2/\Lambda^2)$, which could improve or
        worsen the agreement with data.
  \item \textbf{Quark--lepton unification.}  The present analysis treats
        quarks (via $Z_3$~\cite{Ardakanian:Z3}) and leptons (via $A_4$)
        separately.  A complete theory should provide a single UV
        symmetry that breaks to $Z_3$ in the quark sector and $A_4$ in
        the lepton sector.  Candidate unifying groups include $T'$ (the
        double cover of $A_4$, order~24) and $S_4$ (order~24), both of
        which contain $Z_3$ and $A_4$ as
        subgroups~\cite{King:2013eh,Chauhan:2023ppnp}.
\end{enumerate}

%=============================================
\section{Conclusions}
\label{sec:conclusions}
%=============================================

We have systematically investigated the $Z_3$ Froggatt--Nielsen
framework in the neutrino sector---including the type-I seesaw with a
$Z_3$-charged Majorana mass matrix---and shown that $A_4$ is the
minimal discrete symmetry that resolves its structural failures.  The
main findings are:

\begin{enumerate}
  \item When $\MR$ carries the $Z_3$ charge structure dictated by the
        correct Majorana charge algebra---where $\eps$-powers follow
        $(q_i+q_j)\bmod 3$---it contains an unsuppressed off-diagonal
        entry whose dominance in $\MR^{-1}$, combined with the
        hierarchical column texture of $\MD$,
        \textbf{over-suppresses both} $m_1$ \textbf{and} $m_2$ to
        $\mathcal{O}(\eps^3)$ while $m_3$ remains $\mathcal{O}(1)$.

  \item The solar mass splitting is suppressed by \textbf{eight orders
        of magnitude}: $\Delta m^2_{21}/\Delta m^2_{31}\sim
        4\times 10^{-11}$ (median from $10^5$ MC samples) versus $0.030$
        observed.  Only a fraction $\sim 10^{-5}$ of the $\mathcal{O}(1)$
        parameter space approaches the data through accidental
        cancellations.

  \item This failure is \textbf{universal across all six permutations}
        of $Z_3$ charges, following from a number-theoretic property: for
        any permutation of $(2,1,0)$, the pair with charges 1 and 2
        always sums to $0\pmod{3}$.

  \item We show analytically that the generic ratio scales as
        $\Delta m^2_{21}/\Delta m^2_{31}\sim\mathcal{O}(\eps^6)
        \sim 10^{-11}$, in excellent agreement with the Monte Carlo
        median; fewer than $0.01\%$ of parameter-space points
        exceed $\eps^2 \approx 2\times 10^{-4}$, confirmed by the
        numerical scans.

  \item $A_4$ resolves both failures of~$Z_3$.  Its triplet
        representation provides \textbf{two independent mass parameters}
        ($a$ from the singlet flavon $\xi$ and $b$ from the triplet
        flavon $\varphi_S$) that independently control the solar and
        atmospheric mass splittings, solving the two-scale problem.
        The group theory constrains the mixing pattern to TM$_1$, with
        the first column of the PMNS matrix fixed to the TBM form.

  \item The TM$_1$ solar sum rule predicts
        $\sin^2\theta_{12}=0.318$, consistent with NuFit~6.0
        ($1.2\sigma$)~\cite{Esteban:2024nav} and JUNO
        ($1.0\sigma$)~\cite{JUNO:2025}.  The atmospheric sum rule
        provides a parameter-free correlation between $\delta_{\mathrm{CP}}$
        and $\theta_{23}$, predicting $\delta\approx -71^\circ$ at the
        current best-fit point, testable at DUNE~\cite{DUNE:2020} and
        T2HK~\cite{T2HK:2018}.
\end{enumerate}

The coming decade of precision neutrino experiments will subject these
predictions to definitive tests.  JUNO's ultimate measurement of
$\theta_{12}$ to $\sim 0.003$ precision will determine whether the
TM$_1$ solar sum rule survives.  The simultaneous measurement of
$\theta_{23}$ and $\delta$ at DUNE and T2HK will test the atmospheric
sum rule---a strict one-dimensional correlation in the
$(\sin^2\theta_{23},\,\cos\delta)$ plane that distinguishes $A_4$ from
anarchic models.  If confirmed, these correlations would constitute
strong evidence for $A_4$ as a realised symmetry of the lepton sector,
and for the sectorial flavour structure---$Z_3$ for quarks, $A_4$ for
leptons---as a fundamental organising principle of the Yukawa sector.

%=============================================
\acknowledgments
%=============================================

During the preparation of this work the author used Claude (Anthropic)
to assist with numerical computations, Monte Carlo scan implementation,
analytical cross-checks, and manuscript drafting.  After using this
tool, the author reviewed and edited all content, verified all physics
independently, and takes full responsibility for the content of the
published article.

%=============================================

\end{document}